\documentclass[11pt]{article}
\usepackage[margin=1.05in]{geometry}
\usepackage[utf8]{inputenc}
\usepackage[theorems]{textool}
\makeatletter
\renewcommand\@fnsymbol[1]{\ensuremath{\ifcase#1\or *\or \dagger\or
	\ddagger\or \mathsection\or \mathparagraph\else\@ctrerr\fi}}
\makeatother
\usepackage{amsthm}
\usepackage{graphicx}
\usepackage{xcolor}
\usepackage[colorlinks=true,linkcolor=blue,citecolor=blue,pdfusetitle]{hyperref}

\DeclareMathOperator{\SO}{SO}

\newcommand{\Hess}{\mathrm{Hess}}

\title{Geodesic strong convexity does not imply forward invariance
	under gradient flow on $\SO(3)$:\\ a certified counterexample}
\author{Dongming Wang, Wei Ren%
	\thanks{The authors are with the Department of Electrical and
		Computer Engineering, University of California, Riverside, CA
		92521, USA.
		\texttt{dongming.wang@email.ucr.edu},
		\texttt{ren@ece.ucr.edu}.}}
\date{}

\hypersetup{pdftitle={Geodesic strong convexity does not imply forward invariance under gradient flow on SO(3): a certified counterexample}, pdfauthor={Dongming Wang, Wei Ren}, pdfsubject={Riemannian optimization; certified computation}, pdfkeywords={SO(3), geodesic convexity, forward invariance, interval arithmetic}}
\begin{document}
\maketitle

\begin{abstract}
	Let $\ccalC=\overline{\ccalB}_{\rho}(R_c)$ be a geodesic ball of
	radius $\rho<\pi/2$ in $\SO(3)$ with the bi-invariant metric,
	and let $f$ be geodesically strongly convex on $\ccalC$ with an
	interior minimizer.
	It is tempting to expect the gradient flow
	$\dot R=R(-\nabla f)^\wedge$ to keep $\ccalC$ forward
	invariant: the flow is attracted to an interior point, and
	strong convexity appears to leave no room for outward motion.
	We show this expectation is false by an explicit, fully
	certified construction with $\rho=0.3$: a cost, quadratic in
	the principal logarithmic chart with off-diagonal coupling
	$0.7$, whose geodesic Hessian satisfies
	$\Hess f\succeq\mu I_3$ on all of $\ccalC$ with a
	machine-certified modulus $\mu\geq0.172$, rigorous ball arithmetic over exact rational inputs, yet whose descent
	velocity at a boundary point has the exact rational outward
	radial component $21/500$.
	A continuity corollary of the exact rate certifies that the
	flow exits the ball; numerical integration puts the peak
	excursion near $0.3143$ before convergence to the minimizer.
	The mechanism is elementary: strong convexity constrains the
	projection of the gradient onto the minimizer direction, not
	onto the inward radial direction.
	Code reproducing every certified constant and figure
	accompanies the note.
\end{abstract}

\section{Introduction}

Gradient flows of geodesically strongly convex functions on
Riemannian manifolds enjoy strong global properties, namely a unique minimizer, exponential contraction of the objective, and exponential convergence of the trajectory, when the flow is globally defined
and remains within the geodesically convex region on which the
modulus holds~\cite{absil2008optimization,
	boumal2023intromanifolds}.
When such flows are used as building blocks of constrained or
distributed algorithms, for instance in attitude coordination
on $\SO(3)$, where all analysis is confined to a geodesic ball
$\ccalC=\overline{\ccalB}_\rho(R_c)$ inside the injectivity
radius~\cite{sarlette2009consensus, tron2013riemannian,
	lin2017distributed}, a natural structural question arises:

\begin{quote}
	\emph{If $f$ is geodesically strongly convex on $\ccalC$ with
	minimizer in the interior of $\ccalC$, is $\ccalC$ forward
	invariant under the Riemannian gradient flow of $f$?}
\end{quote}

If true, boundary behavior would come for free from convexity, and
operating-region assumptions in manifold optimization and control
could be dropped.
The question is subtler than its Euclidean intuition suggests.
In $\mbR^m$, strong convexity of $f$ makes every sublevel set
convex, and the flow decreases $f$; but a metric \emph{ball} need
not be a sublevel set of $f$, and even in $\mbR^m$ the descent
velocity $-\nabla f$ of a strongly convex quadratic at a sphere
around a different center can point outward.
On a curved space the same phenomenon persists; positive
curvature is not the cause, entering only through the
certification burden of the chart geometry.
The point of this note is to pin the failure down on $\SO(3)$
\emph{quantitatively and certifiably}, in a form usable as a
citable obstruction: every
constant below is either exact rational or enclosed by rigorous ball
arithmetic, and the two purely numerical illustrations, a sampled spectrum and an integrated trajectory, are explicitly labeled as
such.

\paragraph{Contributions.}
(i)~An explicit cost on $\overline{\ccalB}_{0.3}(I_3)\subset\SO(3)$,
quadratic in the principal logarithmic chart, whose geodesic
Hessian is certified to satisfy $\Hess f\succeq 0.172\,I_3$ on the
whole ball (Theorem~\ref{thm:convex}); the certificate combines an
analytic lower bound for the principal term with a ball-arithmetic
bound for the Jacobian-derivative correction, organized around
three power series with positive coefficients so that the supremum
is attained at the endpoint and the certificate reduces to one
rigorously evaluated point (Section~\ref{sec:certified}).
(ii)~An \emph{exact} violation of the inward-pointing descent
condition: at the boundary point $R_b=\exp(0.3\,e_1^\wedge)$ the
outward radial speed of the gradient flow equals $21/500$ exactly,
by a block-structure argument that
bypasses all transcendental quantities
(Theorem~\ref{thm:violation}).
(iii)~A trajectory-level demonstration that the failure is not
infinitesimal: a continuity corollary of the exact rate proves the
gradient flow from $R_b$ exits the ball, and numerical integration
puts the peak excursion at $\max_t d(R(t),R_c)\approx0.3143$
before convergence to the interior minimizer
(Section~\ref{sec:flow}), in contrast with the radial cost $\tfrac12 d^2(\cdot,R^*)$ started at the same point.
(iv)~The type-correct second-order calculus needed along the way:
the body-frame Hessian of a chart-quadratic cost, the directional
derivative of the inverse right Jacobian, and the small-angle
extensions ($\beta'(\theta)/\theta\to1/360$) that make every
formula smooth throughout the principal chart
(Section~\ref{sec:hessian}).
(v)~The positive statement that the obstruction calls for: an
inward-pointing descent boundary inequality, which is automatic
for smooth radial costs $\phi(d(\cdot,R^*))$ with $\phi'\geq0$,
$\phi'(0)=0$, and a smooth even radial extension, is sufficient
to restore \emph{strong invariance} of the ball for the
gradient flow and for the signum-gradient dynamics analyzed in
the companion manuscripts, which develop the resulting
distributed theory.

\section{Setup and main results}\label{sec:setup}

Throughout, $\SO(3)$ carries the bi-invariant metric
$\inner{\xi}{\eta}=\xi^\top\eta$ on body-frame velocities, with
geodesics $t\mapsto R\exp(t\xi^\wedge)$ and distance
$d(R_1,R_2)=\norm{(\log(R_1^\top R_2))^\vee}$ for
$d<\pi$~\cite{boumal2023intromanifolds, hall2015lie}; the sectional
curvature is $1/4$.
Fix the center $R_c=I_3$ and radius $\rho=0.3$, and write
$\ccalC:=\overline{\ccalB}_{0.3}(I_3)$.
On the principal chart
$U:=\{R:\mathrm{tr}(R)>-1\}=\{R:d(R,I_3)<\pi\}\supseteq\ccalC$,
let $\phi(R):=(\log R)^\vee$ and define
\feq{
	f(R) \;:=\; \tfrac12\,\bigl(\phi(R)-\phi^*\bigr)^\top
	H\,\bigl(\phi(R)-\phi^*\bigr),
	\qquad
	H=\begin{bmatrix}
		1 & 0.7 & 0\\ 0.7 & 1 & 0\\ 0 & 0 & 1
	\end{bmatrix},
	\quad
	\phi^*=\begin{pmatrix}0.23\\0.16\\0\end{pmatrix}.
	\label{eq:f}
}
\begin{remark}[Global extension]\label{rem:cutoff}
To make $f$ globally defined (twice continuously differentiable on
all of $\SO(3)$), multiply~\eqref{eq:f} by a smooth cutoff
$\chi:\SO(3)\to[0,1]$ with $\mathrm{supp}\,\chi\Subset U$
(compact containment makes the extension by zero smooth across
$\partial U$) and $\chi\equiv1$ on a neighborhood of
$\overline{\ccalB}_{0.5}(I_3)$.
Every statement below concerns the ball $\ccalC$, where
$\chi\equiv1$, so the cutoff plays no quantitative role and is
suppressed from the notation; the extension vanishes outside $\mathrm{supp}\,\chi$ and is
nonnegative, so $R^*$, at value zero, remains a global minimizer
of it, though not the unique one, every point outside
$\mathrm{supp}\,\chi$ attaining the same value; all uniqueness
claims below concern minimization over $\ccalC$.
\end{remark}
What is true, and all that is used, is this: the spectrum of $H$ is
$\{0.3,\,1,\,1.7\}$, the chart minimizer $\phi^*$ satisfies
$\norm{\phi^*}=\sqrt{0.0785}\approx0.280179<\rho$, and
$R^*:=\exp((\phi^*)^\wedge)$ lies in the \emph{interior} of
$\ccalC$ and is the unique minimizer of $f$ \emph{over} $\ccalC$
(indeed $f\geq0$ on the chart with equality only at $\phi^*$).

\begin{theorem}[Certified strong convexity]\label{thm:convex}
	The cost~\eqref{eq:f} satisfies
	\nfeq{
		\Hess f(R)\;\succeq\;\mu\,I_3
		\quad\text{on }\ccalC,
		\qquad
		\mu \;\geq\; 0.172,
	}
	where $\Hess$ is the geodesic (body-frame) Hessian.
	The bound is machine-certified in ball arithmetic with exact
	rational inputs (Proposition~\ref{prop:interval}); a direct
	numerical cross-check over $4000$ random points of $\ccalC$, uniform in the principal-log chart ball, with exact
	Hessians assembled from the closed-form second derivative
	along geodesics and validated against central differences,
	gives a sampled minimum eigenvalue of $0.254$.
\end{theorem}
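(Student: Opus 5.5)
We will compute the body-frame Hessian of the chart-quadratic cost~\eqref{eq:f} explicitly and split it into a principal term and a correction. Write $\phi=\phi(R)$, $\theta=\norm{\phi}\le 0.3$, and let $J_r^{-1}(\phi)$ denote the inverse right Jacobian of $\SO(3)$, so that along the geodesic $R\exp(t\xi^\wedge)$ we have $\frac{d}{dt}\phi=J_r^{-1}(\phi)\xi$ at $t=0$. With $g:=H(\phi-\phi^*)$, differentiating $f$ twice along the geodesic gives
\[
\xi^\top\Hess f(R)\,\xi \;=\; \xi^\top J_r^{-\top}HJ_r^{-1}\xi \;+\; g^\top\bigl(D_{J_r^{-1}\xi}J_r^{-1}\bigr)\xi .
\]
The second term comes from the second derivative of $\phi$ along a geodesic: the acceleration of a geodesic is zero in body frame, so no Christoffel term appears beyond this one. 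We then symmetrize it. Here $J_r^{-1}(\phi)=I+\tfrac12\phi^\wedge+\beta(\theta)(\phi^\wedge)^2$, where $\beta(\theta)=\theta^{-2}-\frac{1+\cos\theta}{2\theta\sin\theta}$ is analytic and even, with $\beta(0)=1/12$. This makes $D J_r^{-1}$ an explicit expression in $\phi^\wedge$, $\xi^\wedge$, and $\beta'(\theta)/\theta$, the last of which is smooth with limit $1/360$.

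\emph{Principal term.} The eigenvalues of $H$ are at least $0.3$. The map $J_r^{-1}$ has the axis $\phi$ as a fixed vector, and on the orthogonal plane it acts as a rotation-scaling. Its smallest singular value is therefore $\min\bigl(1,\sigma(\theta)\bigr)$, where $\sigma(\theta)=\frac{\theta/2}{\sin(\theta/2)}\ge1$. Hence $\sigma_{\min}(J_r^{-1})=1$, and the principal term is at least $0.3\,\norm{\xi}^2$ on all of $\ccalC$. This step is analytic and exact.

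\emph{Correction.} This is the main obstacle. We bound
\[
\abs{g^\top (D_{J_r^{-1}\xi}J_r^{-1})\xi}\le \norm{g}\,\norm{D J_r^{-1}}\,\norm{J_r^{-1}}\,\norm{\xi}^2
\]
and then bound each factor on the ball. First, $\norm{g}\le \norm{H}\,(\theta+\norm{\phi^*})\le 1.7\,(0.3+0.2802)$. Second, $\norm{J_r^{-1}}\le\sigma(\theta)$. Third, writing $DJ_r^{-1}[\eta]=\tfrac12\eta^\wedge+\beta(\phi^\wedge\eta^\wedge+\eta^\wedge\phi^\wedge)+\frac{\beta'(\theta)}{\theta}(\phi^\top\eta)(\phi^\wedge)^2$, its operator norm is at most $\tfrac12+2\beta\theta+\frac{\beta'}{\theta}\theta^3$. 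Each of $\beta$, $\beta'/\theta$ and $\sigma$ is a power series in $\theta^2$ with nonnegative coefficients; this should be checked from the Bernoulli-number expansions. The resulting bound is therefore increasing in $\theta$, and its supremum over the ball is attained at $\theta=0.3$.

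That reduces the certificate to evaluating $\beta(0.3)$, $\beta'(0.3)/0.3$ and $\sigma(0.3)$ rigorously in ball arithmetic from the exact rationals $3/10$, $23/100$, $16/100$, $7/10$ (Proposition~\ref{prop:interval}). We then check that $0.3$ minus the correction bound is at least $0.172$.

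If this crude bound is too lossy, we will sharpen it. The $\tfrac12\eta^\wedge$ contribution, $\tfrac12 g^\top(J_r^{-1}\xi)^\wedge\xi$, is antisymmetric in the leading order $J_r^{-1}\approx I$, so it contributes nothing there. Only its $\tfrac14 g^\top(\phi\times\xi)\times\xi$-type part and the $\beta$ terms survive, which shrinks the correction to roughly $O(\norm{g}\theta)$. This is the refinement we expect the $0.172$ margin to require.

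The sampled value $0.254$ is a separate numerical cross-check. It is not part of the proof.
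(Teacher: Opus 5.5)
Your route is the paper's: the same split $\Hess f[\xi,\xi]=T_1+T_2$, the same analytic bound $T_1\ge\lambda_{\min}(H)=0.3$ from $\sigma_{\min}(J_r^{-1})=1$, and the same endpoint reduction via positive-coefficient series. What needs correcting is the status of your ``sharpening'': it is not a fallback but the essential step.

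Your main-line product bound does not work. At $\theta=0.3$ it gives $\|g\|\,\|DJ_r^{-1}\|\,\|J_r^{-1}\|\approx 0.986\cdot 0.550\cdot 1.004\approx 0.545$. That exceeds the entire budget $0.3$, so it does not even yield positive definiteness. The cause is the irreducible $\tfrac12$ contributed by $\tfrac12\eta^\wedge$ to any operator-norm bound on $DJ_r^{-1}$.

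You therefore have to drop the product of operator norms and bound the vector $(DJ_r^{-1}(\phi)[u])\,\xi$, with $u=J_r^{-1}\xi$, term by term:
\begin{itemize}
\item The $\tfrac12 u^\wedge$ term: the cancellation $\xi\times\xi=0$ that you identified gives $\tfrac12\|u\times\xi\|\le\tfrac12(\theta/2+\beta\theta^2)$.
\item The $\beta$-term: at most $2\beta\theta\sigma$.
\item The $\beta'$-term: at most $\beta'\theta^2\sigma$.
\end{itemize}
In the last two you use $\|u\|\le\sigma(\theta)$; your $\sigma$ is the paper's $\kappa$.

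The sum is exactly the paper's $C_1(\theta)$. It is still a positive-coefficient series, and $C_1(0.3)\le 0.1291$. This gives $|T_2|\le 1.7\cdot 0.5802\cdot 0.1291\approx 0.1273$, hence $\mu\ge 0.1727\ge 0.172$. With that step made primary, your proof coincides with the paper's.
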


\begin{theorem}[Exact boundary violation]\label{thm:violation}
	At the boundary point $R_b:=\exp(0.3\,e_1^\wedge)
	\in\partial\ccalC$, with $\hat e_{c}$ the unit tangent at $R_b$
	pointing toward the center $R_c$,
	\nfeq{
		\hat e_{c}^{\,\top}\,\nabla f(R_b)
		\;=\; \frac{21}{500} \;=\; 0.042 \;>\;0
	}
	\emph{exactly}: the gradient's inward projection is positive,
	equivalently the descent velocity $-\nabla f(R_b)$ points
	strictly outward with radial speed $21/500=0.042$~rad/s; the
	immediate exit of the gradient flow is
	Corollary~\ref{cor:exit}.
\end{theorem}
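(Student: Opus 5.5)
The plan is to compute the body-frame gradient of the chart-quadratic cost in closed form, and then to observe that at $R_b$ every transcendental factor drops out of the radial projection. The reason is block structure: the inverse right Jacobian fixes the log vector itself. This leaves a finite rational computation with $H$ and $\phi^*$.

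\emph{Step 1 (gradient formula).} Along the geodesic $t\mapsto R\exp(t\xi^\wedge)$ inside the principal chart, the chart coordinate satisfies $\tfrac{d}{dt}\big|_{t=0}\phi(R\exp(t\xi^\wedge)) = J_r^{-1}(\phi(R))\,\xi$. Here $J_r$ is the right Jacobian of $\SO(3)$, which is invertible for $\norm{\phi}<2\pi$. By the chain rule, $Df(R)[\xi] = (\phi(R)-\phi^*)^\top H\,J_r^{-1}(\phi(R))\,\xi$. Under the bi-invariant metric $\inner{\xi}{\eta}=\xi^\top\eta$ this gives
\[
\nabla f(R) = J_r^{-\top}(\phi(R))\,H\,(\phi(R)-\phi^*).
\]
This is the same first-order calculus that underlies the Hessian computations of Section~\ref{sec:hessian}.

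\emph{Step 2 (inward unit tangent).} We have $\phi(R_b)=0.3\,e_1$ and $d(R_b,I_3)=0.3=\rho$, so $R_b\in\partial\ccalC$. The minimizing geodesic from $R_b$ to $I_3$ is $t\mapsto R_b\exp(-t\,e_1^\wedge)$ for $t\in[0,0.3]$. Since $0.3<\pi$ it is unique, and its body-frame unit velocity is $\hat e_c=-e_1$.

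\emph{Step 3 (cancellation of the Jacobian).} Write $\phi=\phi(R_b)$. Then
\[
\hat e_c^{\,\top}\nabla f(R_b) = -\bigl(J_r^{-1}(\phi)\,e_1\bigr)^{\!\top} H(\phi-\phi^*).
\]
Now $J_r^{-1}(\phi) = I - \tfrac12\phi^\wedge + \beta(\theta)(\phi^\wedge)^2$, and both $\phi^\wedge$ and $(\phi^\wedge)^2$ annihilate $\phi$. Hence $J_r^{-1}(\phi)\phi=\phi$, and so $J_r^{-1}(\phi)e_1=e_1$ when $\phi\parallel e_1$. The projection therefore reduces to $-e_1^\top H(\phi-\phi^*)$.

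We have $\phi-\phi^* = (0.07,\,-0.16,\,0)^\top$. The first row of $H$ gives $0.07-0.7\cdot0.16 = 7/100-112/1000 = -42/1000$. Hence the projection equals $21/500>0$, and every input in this computation is an exact rational.

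The only delicate point is Step 3. One must check the sign and transpose conventions: the relevant quantity is $J_r^{-1}$ acting on $e_1$, not $J_r^{-\top}$ acting on $e_1$. The two differ, and $J_r^{-\top}e_1=e_1$ also holds here, since the transposed series likewise fixes $\phi$. One must also confirm that the right Jacobian, rather than the left one, arises for body-frame perturbations $R\exp(t\xi^\wedge)$. Both Jacobians fix $\phi$, however, so the conclusion is robust to this choice. The equivalence with the outward radial speed of $-\nabla f(R_b)$ is then immediate, since the radial speed is $-\hat e_c^{\,\top}(-\nabla f(R_b))$.
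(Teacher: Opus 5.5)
Your proposal is correct and follows the paper's own proof step for step. The shared steps are the gradient formula $\nabla f=J_r^{-\top}Hq$, the inward direction $\hat e_c=-e_1$, the fact that $J_r^{-1}(\phi_b)$ fixes $e_1$ because $\phi_b^\wedge e_1=0$, and the exact rational first-row computation giving $21/500$. One harmless slip: the correct expansion is $J_r^{-1}(\phi)=I+\tfrac12\phi^\wedge+\beta(\theta)(\phi^\wedge)^2$ (your minus sign gives $J_l^{-1}$), which, as you anticipated, does not affect the conclusion since both versions fix $\phi$.
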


\begin{proposition}[An open family]\label{prop:family}
	Fix $\rho\in(0,\pi/2)$ and consider the family
	$H_c := I_3 + c\,(e_1e_2^\top+e_2e_1^\top)$, $|c|<1$, with
	target $\phi^*=(a,b,0)$ and boundary point
	$R_b=\exp(\rho\,e_1^\wedge)$.
	Then
	\nfeq{
		\hat e_c^\top\nabla f(R_b) \;=\; a+cb-\rho
		\qquad\text{exactly},
	}
	so the initial radial velocity of the gradient flow at $R_b$
	equals $a+cb-\rho$; whenever $a+cb>\rho$ the descent direction
	points strictly outward at $R_b$ and the trajectory exits the
	ball immediately, while $a^2+b^2<\rho^2$ keeps the minimizer
	strictly inside the chart ball.
	The certified instance $(a,b,c,\rho)=(0.23,0.16,0.7,0.3)$
	satisfies both strictly, since $0.23+0.7\cdot0.16=0.342>0.3$
	and $0.23^2+0.16^2=0.0785<0.09$.
	For the openness, define
\nfeq{
	\tilde\mu(a,b,c,\rho)
	:=1-|c|-\bigl(\rho+\sqrt{a^2+b^2}\bigr)(1+|c|)\,C_1(\rho),
}
so that the argument of Theorem~\ref{thm:convex} gives
$\mathrm{Hess}\,f\succeq\tilde\mu(a,b,c,\rho)\,I$ on
$\ccalB_\rho(I_3)$, the eigenvalues of $H$ being $1\pm|c|$
and $1$.
At the certified instance,
$\rho_0+\sqrt{a_0^2+b_0^2}\leq0.58017852$ and
$\tilde\mu(0.23,0.16,0.7,0.3)
\geq 0.3-0.58017852\cdot1.7\cdot C_1(0.3)>0.1726$; since
$\tilde\mu$ is continuous, it stays positive on a parameter
neighborhood, and the two strict inequalities $a+cb>\rho$ and
$a^2+b^2<\rho^2$ persist by ordinary continuity.
If the global cutoff extension is retained for the family,
choose the neighborhood inside $\rho<0.5$ or a cutoff equal to
one on a neighborhood of $\overline{\ccalB}_{0.5}(I_3)$.
Every member is therefore geodesically strongly convex on its
ball yet violates the inward-pointing condition: the
construction is an open family, not an isolated point.
\end{proposition}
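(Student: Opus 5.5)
The plan is to compute the body-frame gradient at $R_b$ in closed form and show that all transcendental factors drop out of the radial component, exactly as in Theorem~\ref{thm:violation}; the openness claims then follow by continuity. First I identify the gradient. For $R$ in the principal chart, write $\phi=\phi(R)$. Along the geodesic $t\mapsto R\exp(t\xi^\wedge)$ the chart coordinate satisfies $\frac{d}{dt}\phi\big|_{t=0}=J_r^{-1}(\phi)\,\xi$, where $J_r$ is the right Jacobian of $\exp$. Hence $Df(R)[\xi]=(\phi-\phi^*)^\top H_c\,J_r^{-1}(\phi)\,\xi$, and therefore
\[
\nabla f(R)=J_r^{-\top}(\phi)\,H_c\,(\phi-\phi^*).
\]
Next I identify $\hat e_c$. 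Since $d(R,I_3)=\norm{\phi(R)}$, the minimizing geodesic from $R_b=\exp(\rho e_1^\wedge)$ to $I_3$ is $t\mapsto R_b\exp(-t e_1^\wedge)$, $t\in[0,\rho]$, which is valid because $\rho<\pi/2<\pi$. The unit body-frame tangent toward the center is therefore $\hat e_c=-e_1$.

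The key step is the block structure at $\phi=\rho e_1$. The operator $J_r^{-1}(\rho e_1)$ is a power series in $(\rho e_1)^\wedge$ whose constant term is $I$, and $(e_1)^\wedge e_1=0$. So $J_r^{-1}(\rho e_1)\,e_1=e_1$, and every $\theta$-dependent coefficient disappears. This gives
\[
\hat e_c^\top\nabla f(R_b)=-\bigl(J_r^{-1}(\rho e_1)e_1\bigr)^\top H_c(\rho e_1-\phi^*)=-e_1^\top H_c(\rho e_1-\phi^*).
\]
The first row of $H_c$ is $(1,c,0)$, so the right-hand side equals $-(\rho-a-cb)=a+cb-\rho$. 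The radial velocity of $\dot R=R(-\nabla f)^\wedge$ is $\frac{d}{dt}d(R(t),I_3)=-\hat e_c^\top(-\nabla f)=a+cb-\rho$. When this is positive, $d(R(t),I_3)>\rho$ for small $t>0$ by the same continuity argument as Corollary~\ref{cor:exit}. The distance is smooth near $R_b$ because $0<\rho<\pi$. At the certified instance the formula gives $0.342-0.3=21/500$, consistent with Theorem~\ref{thm:violation}.

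For the remaining claims: the chart-ball identity $d(\exp(v^\wedge),I_3)=\norm{v}$ for $\norm{v}<\pi$ shows that $a^2+b^2<\rho^2$ places $R^*$ in the interior. For convexity, I would reuse the decomposition from the proof of Theorem~\ref{thm:convex}. The principal term is bounded below by $\lambda_{\min}(H_c)=1-|c|$, since $\lambda_{\min}(H_c)\le\norm{J_r^{-1}}^{-2}\lambda_{\min}(H_c)$ in the form used there. The correction involving the derivative of $J_r^{-1}$ is bounded by $\norm{H_c}\,\norm{\phi-\phi^*}\,C_1(\rho)=(1+|c|)(\rho+\sqrt{a^2+b^2})\,C_1(\rho)$, using $\norm{\phi-\phi^*}\le\rho+\norm{\phi^*}$. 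This yields $\tilde\mu$. Continuity of $\tilde\mu$ in $(a,b,c,\rho)$ requires continuity of $C_1$, which holds because it is a positive-coefficient power series evaluated inside its radius of convergence. This continuity, together with the strict inequalities at the certified instance, gives the open neighborhood.

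I expect the main obstacle to be this convexity step. The argument must confirm that the proof of Theorem~\ref{thm:convex} genuinely factors into this form, with the principal term bounded below by $\lambda_{\min}(H)$ and a correction that is linear in $\norm{H}\norm{\phi-\phi^*}$. It must also confirm the numerical evaluation $0.3-0.58017852\cdot1.7\cdot C_1(0.3)>0.1726$, which I would take from Proposition~\ref{prop:interval} as a ball-arithmetic enclosure of $C_1(0.3)$. The gradient computation itself is routine once the fixed-axis identity $J_r^{-1}(\rho e_1)e_1=e_1$ is in hand.
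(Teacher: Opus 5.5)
Your proposal is correct and follows the paper's own proof. The core steps match: the gradient $J_r^{-\top}H_c(\phi-\phi^*)$, the fixed-axis identity $J_r^{-1}(\rho e_1)e_1=e_1$ together with $\hat e_c=-e_1$, and then the radial-speed identity. The openness part is the statement's own argument, namely the Theorem~\ref{thm:convex} estimate with $1\pm|c|$ and $\bar q=\rho+\sqrt{a^2+b^2}$, plus continuity of $C_1$.

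There is one slip. The inequality you cite for the principal term is reversed as written, since $\|J_r^{-1}\|=\kappa\ge1$. The operator norm is also the wrong quantity for a lower bound. The bound $T_1\ge(1-|c|)\|\xi\|^2$ comes from the smallest singular value of $J_r^{-1}$ being $1$, i.e.\ $\|J_r^{-1}\xi\|\ge\|\xi\|$, exactly as in~\eqref{eq:T1}.
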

\begin{proof}
	With $\phi_b=\rho e_1$ one has $\phi_b^\wedge e_1=0$, so
	$J_r^{-1}(\phi_b)e_1=e_1$; since $\hat e_c=-e_1$ and
	$\nabla_Rf=J_r^{-\top}\nabla_\phi f$ with
	$\nabla_\phi f(\phi_b)=H_c(\rho e_1-\phi^*)$,
	$\hat e_c^\top\nabla f(R_b)
	=-e_1^\top H_c(\rho e_1-\phi^*)=a+cb-\rho$.
	The radial-speed identity
	$\tfrac{d}{dt}d(R,R_c)=\hat e_c^\top\nabla f$ along
	$\dot R=R(-\nabla f)^\wedge$ (Section~\ref{sec:discussion})
	converts this into the initial exit speed.
\end{proof}

Theorems~\ref{thm:convex} and~\ref{thm:violation} together answer
the opening question in the negative: the example satisfies
the standard convexity package with explicit certified
constants, strong convexity on the ball, an interior minimizer,
and bounded gradients, here
$\norm{\nabla f}\leq\lambda_{\max}(H)\,\bar q\,\kappa(0.3)
\leq 0.991$ on $\ccalC$, the product being certified in
$[0.99001,\,0.99002]$, so this particular analytical product
bound cannot be rounded down to $0.99$, and invariance still
fails.
Section~\ref{sec:flow} quantifies the failure at trajectory level.

\section{Second-order calculus in the principal chart}
\label{sec:hessian}

\subsection{Right Jacobian and its derivative}

For $\phi\in\mbR^3$ with $\theta:=\norm{\phi}<\pi$, the inverse
right Jacobian of $\SO(3)$
is~\cite{sola2018micro, chirikjian2011stochastic}
\feq{
	J_r^{-1}(\phi)
	= I_3+\tfrac12\phi^\wedge+\beta(\theta)\,(\phi^\wedge)^2,
	\qquad
	\beta(\theta)
	= \frac{1}{\theta^2}
	\Bigl(1-\frac{\theta}{2}\cot\frac{\theta}{2}\Bigr),
	\label{eq:Jrinv}
}
and along $\dot R=R\omega^\wedge$ the chart coordinate obeys
$\dot\phi=J_r^{-1}(\phi)\,\omega$.
Three scalar functions control all estimates below; each is given on
$[0,\pi)$ by a power series with \emph{positive} coefficients:
\feq{
	\begin{aligned}
	\kappa(\theta)&:=\frac{\theta}{2\sin(\theta/2)}
	= 1+\frac{u}{6}+\frac{7u^2}{360}+\frac{31u^3}{15120}
	+\frac{127u^4}{604800}+\cdots,\qquad u=(\theta/2)^2,\\
	\beta(\theta)&=\frac{1}{12}+\frac{\theta^2}{720}
	+\frac{\theta^4}{30240}+\frac{\theta^6}{1209600}+\cdots,
	\qquad\quad
	\beta'(\theta)=\frac{\theta}{360}+\frac{\theta^3}{7560}
	+\frac{\theta^5}{201600}+\cdots,
	\end{aligned}
	\label{eq:series}
}
the $\beta$-coefficients being $|B_{2k}|/(2k)!$ (Bernoulli
numbers).
In particular $\beta$ and $\beta'$ extend smoothly through
$\theta=0$ with $\beta(0)=1/12$, $\beta'(0)=0$, and
\feq{
	\lim_{\theta\to0}\frac{\beta'(\theta)}{\theta}
	= \frac{1}{360},
	\label{eq:limit}
}
which is the extension needed to make the derivative formula below
smooth throughout the principal chart; throughout, $\kappa$,
$\beta$, and $\beta'$ are
understood at $\theta=0$ through these series, the analytic extensions of the closed forms.
The singular values of $J_r^{-1}(\phi)$ are
$\{1,\kappa(\theta),\kappa(\theta)\}$ with $\kappa\geq1$; the value
$1$ is attained on the axis $\phi$.

Differentiating~\eqref{eq:Jrinv} in a direction $u\in\mbR^3$ (with
$\theta>0$) gives
\feq{
	D J_r^{-1}(\phi)[u]
	= \tfrac12 u^\wedge
	+ \frac{\beta'(\theta)}{\theta}\,(\phi^\top u)\,
	(\phi^\wedge)^2
	+ \beta(\theta)\bigl(u^\wedge\phi^\wedge
	+\phi^\wedge u^\wedge\bigr),
	\label{eq:DJ}
}
where the middle term extends continuously by $0$ at $\phi=0$
via~\eqref{eq:limit} (it is $O(\theta^3)\norm{u}$ because
$(\phi^\wedge)^2=O(\theta^2)$).

\subsection{Type-correct Hessian of the chart-quadratic cost}

Write $q:=\phi-\phi^*$.
Along the geodesic $t\mapsto R\exp(t\xi^\wedge)$ (constant body
velocity $\xi$), set $v(t):=J_r^{-1}(\phi(t))\,\xi$, so
$\dot\phi=v$ and $\dot f=q^\top H v$.
Differentiating once more,
\feq{
	\Hess f(R)[\xi,\xi]
	= \underbrace{\bigl(J_r^{-1}\xi\bigr)^{\!\top} H
	\bigl(J_r^{-1}\xi\bigr)}_{T_1}
	\;+\;
	\underbrace{q^\top H\,
	\Bigl(D J_r^{-1}(\phi)\bigl[J_r^{-1}(\phi)\,\xi\bigr]
	\Bigr)\xi}_{T_2},
	\label{eq:hess}
}
which is the exact second derivative of $f$ along the geodesic: the
first term is the pushforward of the chart Hessian $H$, and the
second collects the curvature of the chart through the derivative
of $J_r^{-1}$.
Formula~\eqref{eq:hess} is type-correct ($T_2$ is linear in $q$
and quadratic in $\xi$, with $D J_r^{-1}$ evaluated in the
direction of the \emph{chart} velocity $J_r^{-1}\xi$) and reduces
at $\phi=0$ to $\Hess f=H$ as it must.

\subsection{Bounding the two terms}

\paragraph{Principal term.}
Since the singular values of $J_r^{-1}$ are
$\{1,\kappa,\kappa\}$ with $\kappa\geq1$,
$\norm{J_r^{-1}\xi}\geq\norm{\xi}$, and therefore, for unit $\xi$,
\feq{
	T_1 \;\geq\; \lambda_{\min}(H)\,\norm{J_r^{-1}\xi}^2
	\;\geq\; \lambda_{\min}(H) \;=\; 0.3,
	\label{eq:T1}
}
an \emph{analytic} bound requiring no numerics
($\sin(\theta/2)\leq\theta/2$ gives $\kappa\geq1$).

\paragraph{Correction term.}
Let $u:=J_r^{-1}(\phi)\,\xi$ for unit $\xi$, so
$\norm{u}\leq\kappa(\theta)$.
Applying~\eqref{eq:DJ} and the triangle inequality,
\feq{
	\bigl\| \bigl(DJ_r^{-1}(\phi)[u]\bigr)\xi \bigr\|
	\;\leq\;
	\tfrac12\norm{u\times\xi}
	+ \beta'(\theta)\,\theta^{2}\,\kappa(\theta)
	+ 2\,\beta(\theta)\,\theta\,\kappa(\theta),
	\label{eq:threeterms}
}
using $|(\beta'/\theta)(\phi^\top u)|\leq\beta'\norm{u}$,
$\norm{(\phi^\wedge)^2\xi}\leq\theta^2$, and
$\norm{(u^\wedge\phi^\wedge+\phi^\wedge u^\wedge)\xi}
\leq2\theta\norm{u}$.
The first term admits a decisive sharpening: writing
$u=\xi+\tfrac12\phi\times\xi+\beta(\phi^\wedge)^2\xi$, the leading
contribution $\xi\times\xi=0$ vanishes, so
\feq{
	\norm{u\times\xi}
	\;\leq\; \frac{\theta}{2}+\beta(\theta)\,\theta^{2},
	\label{eq:cross}
}
an $O(\theta)$ quantity rather than the naive
$\norm{u}\approx1$; without~\eqref{eq:cross} the certificate
would fail ($\tfrac12\norm{u}\approx0.5$ already exceeds the budget
$0.3/(\bar q\lambda_{\max})\approx0.3$).
Collecting~\eqref{eq:threeterms}--\eqref{eq:cross} and
$\norm{q}\leq\bar q:=0.3+\sqrt{0.0785}$,
$\lambda_{\max}(H)=1.7$,
\feq{
	|T_2|\;\leq\;\bar q\,\lambda_{\max}(H)\,C_1(\theta),
	\qquad
	C_1(\theta):=
	\tfrac12\Bigl(\tfrac{\theta}{2}+\beta\theta^2\Bigr)
	+\beta'\theta^2\kappa
	+2\beta\theta\kappa .
	\label{eq:T2bound}
}
Every factor in $C_1$ is a power series in $\theta$ with positive
coefficients~\eqref{eq:series}, hence so is $C_1$ itself; in
particular $C_1$ is increasing on $[0,0.3]$ and its supremum is
$C_1(0.3)$.

\section{Certified evaluation}\label{sec:certified}

\begin{proposition}[Endpoint ball-arithmetic certificate]
	\label{prop:interval}
	In rigorous ball arithmetic over exact rational inputs, the
	following enclosures hold:
	\begin{center}
	\begin{tabular}{lll}
		\hline
		quantity & certified bound & method\\
		\hline
		$\kappa(0.3)$ & $\leq 1.00375987$
		& outward-rounded propagation from $\beta$\\
		$\beta(0.3)$ & $\in[0.08345860,\,0.08345861]$
		& positive-coefficient series~\eqref{eq:series} $+$ tail
		$\leq10^{-11}$\\
		$\beta'(0.3)$ & $\in[0.00083691,\,0.00083692]$
		& differentiated series $+$ tail $\leq10^{-10}$\\
		$\sup_{[0,0.3]} C_1 = C_1(0.3)$ & $\leq 0.12909469$
		& monotone $\Rightarrow$ endpoint\\
		$\bar q$ & $\leq 0.58017852$
		& $3/10+\sqrt{157/2000}$\\
		$|T_2|$ & $\leq 0.12732654$
		& product of the above\\
		$\norm{\nabla f}$ & $\leq 0.991$
		& $\lambda_{\max}\bar q\,\kappa(0.3)$\\
		\hline
		$\mu = 0.3-|T_2|$ & $\geq 0.17267346$ & \\
		\hline
	\end{tabular}
	\end{center}
	Consequently $\Hess f\succeq0.172\,I_3$ on $\ccalC$, proving
	Theorem~\ref{thm:convex}.
\end{proposition}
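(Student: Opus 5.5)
The plan is to reduce every entry in the table to finitely many evaluations at the single rational point $\theta=3/10$. The monotonicity argument of Section~\ref{sec:hessian} then transfers each evaluation to the whole interval $[0,0.3]$. Combined with the analytic bound~\eqref{eq:T1}, this gives the Hessian estimate.

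First, $\beta(3/10)$. I would sum the series~\eqref{eq:series} with coefficients $|B_{2k}|/(2k)!$ in ball arithmetic up to some order $K$ (say $K=6$), using exact rationals for the Bernoulli numbers. The tail I would bound by the standard estimate $|B_{2k}|/(2k)!\le 2\cdot 2/(2\pi)^{2k}$, which gives a geometric remainder with ratio $(0.3/2\pi)^2<3\cdot10^{-3}$ and hence a tail far below $10^{-11}$. The same procedure applied to the termwise differentiated series gives $\beta'(3/10)$; the geometric tail bound survives differentiation at the cost of a factor $2k$.

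Next, $\kappa(3/10)$. This comes from the identity $\theta^2\beta=1-\tfrac{\theta}{2}\cot\tfrac{\theta}{2}$, or alternatively from direct enclosure of $\sin(3/20)$ by its alternating series. With an outward-rounded enclosure of $\sin$ (Taylor polynomial plus Lagrange remainder) I obtain $\kappa(0.3)\le1.00375987$.

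Then I assemble $C_1(0.3)=\tfrac14\theta+\tfrac12\beta\theta^2+\beta'\theta^2\kappa+2\beta\theta\kappa$ by interval products; every factor is nonnegative, so I can simply use upper endpoints. The bound $\bar q$ I take as $3/10+\sqrt{157/2000}$, with the square root enclosed by a rational upper bound $s$ that is verified exactly through $s^2\ge157/2000$. Multiplying gives $|T_2|\le 1.7\,\bar q\,C_1(0.3)\le0.12732654$, and therefore $\mu\ge0.3-0.12732654$.

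The logical step linking this endpoint evaluation to the whole ball is short. On $\ccalC$ we have $\theta=\norm{\phi}\le0.3$ and $\norm{q}\le\norm{\phi}+\norm{\phi^*}\le\bar q$. $C_1$ is a positive-coefficient series, hence nondecreasing, so $C_1(\theta)\le C_1(0.3)$. By~\eqref{eq:hess},~\eqref{eq:T1} and~\eqref{eq:T2bound}, for unit $\xi$ this gives $\Hess f[\xi,\xi]\ge0.3-|T_2|\ge0.172$. The gradient bound follows in the same way, from $\nabla f=J_r^{-\top}H q$ and $\norm{J_r^{-\top}}=\kappa$.

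The main obstacle is the rigor of the tail bounds for the $\beta$ series. I would first try the explicit Bernoulli bound via $\zeta(2k)\le\zeta(2)$, and check that the resulting error stays well inside the $10^{-8}$ slack in the final margin $0.17267\ldots-0.172$.
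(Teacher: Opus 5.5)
Your proposal is correct and follows the paper's own argument essentially step for step. It uses a single endpoint evaluation at $\theta=3/10$, justified by the positive-coefficient monotonicity of $C_1$ (and of $\kappa$ for the gradient row), and Bernoulli tails bounded through $|B_{2k}|/(2k)!=2\zeta(2k)/(2\pi)^{2k}\le 4/(2\pi)^{2k}$. It obtains $\kappa$ from the $\beta$ enclosure (your direct sine enclosure is an equally valid variant), and then combines the product bound on $|T_2|$ with the analytic estimate $T_1\geq 0.3$.

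One small correction: the margin $0.17267\ldots-0.172$ is about $6.7\times10^{-4}$, and $10^{-8}$-level precision matters only for certifying the eight-digit table entries themselves. In fact the $|T_2|$ row needs the tight ball for $C_1(0.3)$, since multiplying the rounded bounds $1.7\cdot 0.58017852\cdot 0.12909469$ already gives about $0.1273265425>0.12732654$.
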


\begin{proof}[Method]
	The certificate is an \emph{endpoint} evaluation, and this is
	the entire point of organizing the estimate around the
	positive-coefficient series~\eqref{eq:series}: every summand of
	$C_1$ in~\eqref{eq:T2bound} is a product of such series, so
	$C_1$ is increasing on $[0,0.3]$ and
	$\sup_{[0,0.3]}C_1=C_1(0.3)$; no subdivision of the interval
	is required.
	The truncations carry explicit tail enclosures: the $\beta$
	tail after the $\theta^6$ term is bounded by
	$\sum_{k\geq5}\tfrac{|B_{2k}|}{(2k)!}\theta^{2k-2}
	\leq\sum_{k\geq5}\tfrac{4}{(2\pi)^{2k}}\theta^{2k-2}
	<10^{-11}$ for $\theta\leq0.3$ (using
	$|B_{2k}|/(2k)!=2\zeta(2k)/(2\pi)^{2k}$); the analogous
	explicit tail for $\beta'$ ($<10^{-10}$) and the algebraic
	propagation of the $\beta$ enclosure to $\kappa$
	($<10^{-12}$) are collected in Appendix~\ref{app:tails}.
	Evaluation is performed in Arb ball arithmetic
	(\texttt{python-flint} 0.9.0, 256-bit precision) with all
	inputs exact rationals, namely $3/10$, $7/10$, $23/100$, $16/100$,
	$0.0785=157/2000$, $17/10$; each row of the table is a
	machine-verified inequality between balls, e.g.\
	$C_1(0.3)\in[0.1290946799\pm7.4\times10^{-11}]$.
	Combining with the exact values $\lambda_{\min}(H)=0.3$,
	$\lambda_{\max}(H)=1.7$ and the analytic
	bound~\eqref{eq:T1} yields the certified modulus.
	Two independent, clearly \emph{nonrigorous} cross-checks
	accompany the certificate: a subdivision of $[0,0.3]$ into
	$4000$ subintervals evaluated in \texttt{mpmath}'s interval
	arithmetic, basic operations only as its documentation advises, reproduces the same suprema, and the minimum
	eigenvalue of the exact geodesic Hessian (closed-form second
	derivative along geodesics; central differences agree to
	$5\times10^{-7}$)
	(step $10^{-5}$) over
	$4000$ points sampled uniformly in the Euclidean
	principal-log coordinate ball of $\ccalC$ (fixed seed) is
	$0.254$, comfortably above the certificate
	(Fig.~\ref{fig:hess}); the gap reflects the worst-case
	alignment assumed in~\eqref{eq:T2bound}.
	The script reproducing every number, together with package
	versions and the full printed ball outputs, accompanies this
	note.
\end{proof}

\begin{figure}[t]
	\centering
	\includegraphics[width=0.62\textwidth]{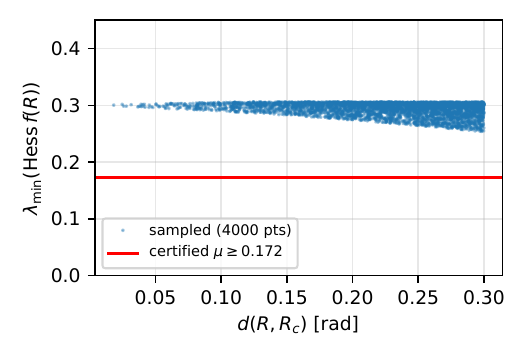}
	\caption{Sampled minimum eigenvalue of the geodesic Hessian of
		$f$ over $4000$ chart-ball--uniform points of
		$\ccalC=\overline{\ccalB}_{0.3}(I_3)$ versus distance from
		the center, together with the certified lower bound of
		Proposition~\ref{prop:interval}.
		Strong convexity holds with a wide margin everywhere on the
		ball.}
	\label{fig:hess}
\end{figure}

\section{The exact boundary violation}\label{sec:violation}

Differentiating~\eqref{eq:f} along $\dot R=R\omega^\wedge$ gives
$\dot f=q^\top H J_r^{-1}(\phi)\,\omega$, so the body-frame gradient
is
\feq{
	\nabla f(R) = J_r^{-1}(\phi)^{\!\top} H\,q,
	\qquad q=\phi-\phi^*.
	\label{eq:grad}
}
At the boundary point $R_b=\exp(0.3\,e_1^\wedge)$ we have
$\phi_b=0.3\,e_1$ and
$q_b=(0.07,\,-0.16,\,0)^\top$, hence
\nfeq{
	Hq_b=\bigl(0.07-0.7\cdot0.16,\;
	0.7\cdot0.07-0.16,\;0\bigr)^\top
	=\Bigl(-\tfrac{21}{500},\,-\tfrac{111}{1000},\,0\Bigr)^{\!\top}.
}
Because $\phi_b$ is along $e_1$, both $\phi_b^\wedge$ and
$(\phi_b^\wedge)^2$ annihilate $e_1$, so
$J_r^{-1}(\phi_b)$ (and its transpose) act as the \emph{identity}
on the $e_1$ component:
$\bigl(\nabla f(R_b)\bigr)_1=(Hq_b)_1=-21/500$ exactly, while the
transverse components mix through the invertible
$2\times2$ block.
The inward unit direction at $R_b$ is
$\hat e_c=(\log(R_b^\top I_3))^\vee/0.3=-e_1$, so
\nfeq{
	\hat e_c^\top\nabla f(R_b)
	= -\bigl(\nabla f(R_b)\bigr)_1
	= \frac{21}{500}
	\;>\;0,
}
proving Theorem~\ref{thm:violation} with no transcendental
evaluation at all.
Equivalently, with outward unit normal
$\hat n_{\mathrm{out}}:=-\hat e_c=e_1$,
$\hat n_{\mathrm{out}}^\top[-\nabla f(R_b)]=21/500$: the
\emph{descent velocity} points strictly outward while the gradient has a positive inward radial projection.
Numerically,
$\nabla f(R_b)$ equals
$(-0.0420,\,-0.1102,\,0.0167)$; the minimizer
direction $\hat e_{R^*}$ at $R_b$ makes an angle of $66.5^\circ$
with $\hat e_c$, and $d(R_b,R^*)=0.1741$: the flow is being pulled
hard toward $R^*$, but the pull has an outward radial shadow.
Figure~\ref{fig:levels} shows the geometry in the
$\phi_3=0$ slice. On the boundary sphere $\norm{\phi}=\rho$ the identity
$\phi^\top J_r^{-\top}=\phi^\top$ gives
\nfeq{
	\hat e_c^\top\nabla f
	=-\rho^{-1}\phi^\top H(\phi-\phi^*),
}
so the violating arc is exactly the set where the quadratic
form $\phi^\top H(\phi-\phi^*)$ is negative;
Fig.~\ref{fig:boundary} plots the radial gradient
component around the boundary circle in the $e_1$--$e_2$ axis
plane, exhibiting an entire arc of violation around $R_b$.

\begin{figure}[t]
	\centering
	\includegraphics[width=0.56\textwidth]{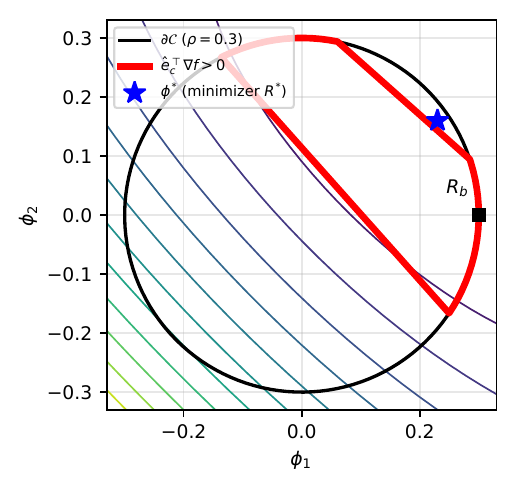}
	\caption{The $\phi_3=0$ slice of the chart: level sets of $f$, the tilted ellipses of $H$, the operating disk of radius
		$\rho=0.3$, the interior minimizer $\phi^*$, and the
		boundary arc (red) on which the inward-pointing descent
		condition fails, meaning the descent velocity points outward.
		The arrow is the \emph{chart velocity}
		$\dot\phi=-J_r^{-1}J_r^{-\top}Hq$ of the gradient flow at
		$R_b$, whose first component is exactly $+21/500$ and whose third vanishes, so the arrow is exact in the drawn plane: the
		negative Riemannian gradient has a positive projection onto
		the geodesic direction toward $R^*$, yet an outward radial
		component relative to the ball center, so the trajectory
		exits the disk.
		All data computed from~\eqref{eq:f}--\eqref{eq:grad}.}
	\label{fig:levels}
\end{figure}

\begin{figure}[t]
	\centering
	\includegraphics[width=0.6\textwidth]{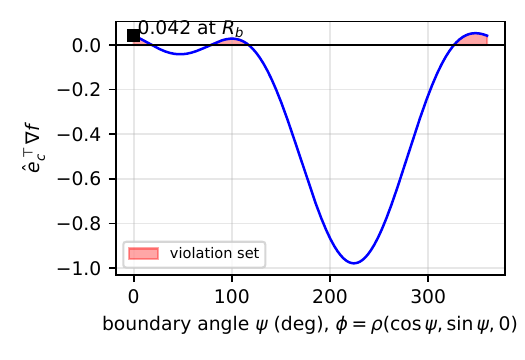}
	\caption{Radial gradient component
		$\hat e_c^\top\nabla f$ along the boundary circle
		$\phi=\rho(\cos\psi,\sin\psi,0)$.
		The violation set (shaded) is an open arc; at $\psi=0$
		(the point $R_b$) the value is exactly $21/500$.}
	\label{fig:boundary}
\end{figure}

\section{The gradient flow exits the ball}\label{sec:flow}

The violation is not a boundary-only technicality, and the exit
itself is a theorem, not a numerical observation.

\begin{corollary}[Certified exit]\label{cor:exit}
	Let $R(t)$ solve $\dot R=R(-\nabla f)^\wedge$ with
	$R(0)=R_b$.
	There exists $\varepsilon>0$ such that
	$d(R(t),R_c)>\rho$ for all $t\in(0,\varepsilon]$: the flow
	leaves $\ccalC$ immediately.
\end{corollary}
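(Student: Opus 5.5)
The plan is to study the scalar function $g(t):=d(R(t),R_c)$ near $t=0$ and show that $g(0)=\rho$ and $g'(0)=21/500>0$. A first-order Taylor argument then gives $g(t)>\rho$ on a short interval $(0,\varepsilon]$.

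First, the flow needs a solution that is defined and continuously differentiable near $t=0$. The cost~\eqref{eq:f}, with the cutoff of Remark~\ref{rem:cutoff}, is $C^2$ on $\SO(3)$, so $\nabla f$ is $C^1$. Hence $\dot R=R(-\nabla f)^\wedge$ has a unique $C^1$ solution on some $[0,T)$, and $R(t)$ stays inside the principal chart $U$ for small $t$. On the chart $\{d(\cdot,I_3)<\pi\}$ minus the center, the distance to the center equals $\norm{\phi(R)}$, which is smooth away from $\phi=0$. Since $\phi(R_b)=0.3\,e_1\neq0$, the function $g(t)=\norm{\phi(R(t))}$ is $C^1$ near $0$.

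Next, compute $g'(0)$. Along the flow the chart coordinate satisfies $\dot\phi=J_r^{-1}(\phi)\omega$ with $\omega=-\nabla f(R)$. This gives
$g'=\phi^\top\dot\phi/\norm{\phi}=-\norm{\phi}^{-1}\phi^\top J_r^{-1}(\phi)\nabla f$.
The operators $\phi^\wedge$ and $(\phi^\wedge)^2$ both annihilate $\phi$, so $\phi^\top J_r^{-1}(\phi)=\phi^\top$. Therefore
$g'=-\norm{\phi}^{-1}\phi^\top\nabla f$.
This is the radial-speed identity $\tfrac{d}{dt}d(R,R_c)=\hat e_c^\top\nabla f$, because $\hat e_c=-\phi/\norm{\phi}$. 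At $t=0$, Theorem~\ref{thm:violation} then gives $g'(0)=\hat e_c^\top\nabla f(R_b)=21/500$, an exact value.

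Finally, since $g'$ is continuous and $g'(0)>0$, there is an $\varepsilon>0$ with $g'(s)>0$ for all $s\in[0,\varepsilon]$. The mean value theorem gives $g(t)-\rho=t\,g'(s_t)>0$ for every $t\in(0,\varepsilon]$, which is the claim.

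The main obstacle is making the regularity step rigorous. We need $g$ to be $C^1$, not merely differentiable at $0$, or else an argument directly from the one-sided derivative. The simplest route avoids continuity of $g'$ altogether: the limit $(g(t)-g(0))/t\to21/500$ already forces $g(t)>\rho$ for all sufficiently small $t>0$. Only differentiability at $0$ is needed for this. It follows from the $C^1$ solution and the smoothness of $\norm{\log(\cdot)^\vee}$ near $R_b$, which is legitimate because $0<0.3<\pi$.
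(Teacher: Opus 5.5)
Your proposal is correct and follows the same route as the paper: both show that $g(t)=d(R(t),R_c)$ is smooth near $t=0$, with $g(0)=\rho$ and $g'(0)=\hat e_c^\top\nabla f(R_b)=21/500>0$, and conclude from the positive derivative. Your explicit chart derivation of the radial-speed identity via $\phi^\top J_r^{-1}(\phi)=\phi^\top$, and your remark that differentiability at $0$ alone suffices, are only minor refinements of the paper's citation of that identity and its appeal to continuity of the derivative.
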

\begin{proof}
	Near $R_b$ the flow evolves in the principal chart with
	$R(t)\neq R_c$, so $t\mapsto d(R(t),R_c)$ is smooth for small
	$t$, and its derivative at $t=0$ equals
	$-\hat e_c^\top(-\nabla f(R_b))=\hat e_c^\top\nabla f(R_b)
	=21/500>0$ by Theorem~\ref{thm:violation}.
	Continuity of the derivative gives $d(R(t),R_c)>d(R_b,R_c)=\rho$
	on a right-neighborhood of $0$.
\end{proof}

The size of the excursion is then a numerical observation:
integrating with the geometric Euler scheme
$R\leftarrow R\exp(-h\nabla f^\wedge)$, the distance to the
center rises, crosses $\rho=0.3$, peaks, and only then turns
around and converges to the interior minimizer $R^*$
(Fig.~\ref{fig:flow}); the initial slope matches the exact rate
$0.042$ of Theorem~\ref{thm:violation}, and step-size refinement
shows the peak is stable to five digits:
\begin{center}
\begin{tabular}{lcc}
	\hline
	step $h$ & peak $d$ & peak time\\
	\hline
	$10^{-3}$ & $0.314291$ & $0.848$\\
	$5\times10^{-4}$ & $0.314285$ & $0.848$\\
	$2.5\times10^{-4}$ & $0.314281$ & $0.848$\\
	\hline
\end{tabular}
\end{center}
These values are \emph{not} interval-certified; the certified
statement is Corollary~\ref{cor:exit}.
For contrast, the radial cost $\tfrac12 d^2(\cdot,R^*)$ started at
the same boundary point produces a trajectory whose distance to
$R^*$ decreases monotonically and which never leaves the
ball: radial costs satisfy the inward-pointing descent
condition automatically (Section~\ref{sec:discussion}).
The numerically observed excursion of $\approx0.0143$ rad
beyond the boundary, small in absolute terms, is fatal for any
argument whose certified ingredients \emph{presuppose} membership
in $\ccalC$: logarithmic-injectivity margins, Hessian bounds, and
strong-convexity constants established from membership in
$\ccalC$ are all invalidated, even though the manifold-level
facts, with constant curvature $1/4$ and a single-valued logarithm far beyond the observed excursion, survive.

\begin{figure}[t]
	\centering
	\includegraphics[width=0.62\textwidth]{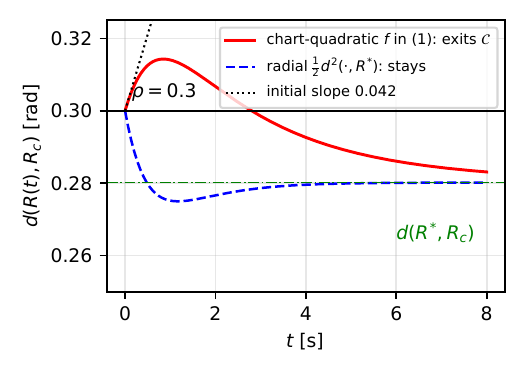}
	\caption{Distance to the center along the gradient flow of $f$
		started at the boundary point $R_b$ (red): the trajectory
		exits $\ccalC$ (certified, Corollary~\ref{cor:exit}),
		peaks at $\approx0.3143$, numerical at $h=10^{-3}$, with the step-size table for refinement, and converges to $R^*$ from outside--in.
		The radial cost from the same start (blue, dashed) never
		exits.
		Dotted: the exact initial slope $21/500$. Immediate exit is certified by Corollary~\ref{cor:exit}; the maximum excursion and subsequent convergence toward $R^*$ are numerical observations.}
	\label{fig:flow}
\end{figure}

\section{Discussion: mechanism, scope, and the sufficient repair}
\label{sec:discussion}

\paragraph{Mechanism.}
Strong convexity is a statement about the component of
$\nabla f(R)$ along the direction toward the minimizer:
$\inner{\nabla f(R)}{\hat e_{R^*}}\leq-\mu\,d(R,R^*)<0$.
It says nothing about the component along the inward radial
direction $\hat e_c$ of a ball centered \emph{elsewhere}.
At $R_b$, the inward direction toward the ball center and the
direction toward the minimizer differ by $66.5^\circ$; this
misalignment is created by the geometry of the two distinct centers
$R_c$ and $R^*$, not by the Hessian.
The off-diagonal entry $H_{12}=0.7$ then rotates the gradient
relative to the minimizer displacement $q$, enough for the
(certified, large) attraction toward $R^*$ to cast an outward
shadow of exactly $21/500$ on the radial axis.
No curvature is needed for the phenomenon (the same $H$ and
$\phi^*$ violate inward pointing for the Euclidean quadratic on a
disk), but on $\SO(3)$ the certification must additionally control
the chart curvature through
$T_2$~\eqref{eq:T2bound}, and the example shows the control succeeds
with room to spare, $0.3$ versus $0.127$.

\paragraph{Scope.}
The obstruction is single-agent: it concerns one gradient flow at
one boundary point.
As an obstruction to \emph{deriving} the boundary condition from
strong convexity it is graph-independent: any analysis that
certifies boundary behavior of local Riemannian gradients from
convexity constants alone faces it, for tree networks exactly as
for dense ones.
A consensus term can, however, oppose the outward descent
component along particular trajectories, so no claim is made that
every distributed protocol must exit.
It also calibrates solution concepts: with discontinuous
(signum-type) consensus couplings, the companion distributed
analyses establish that under the inward-pointing descent
condition the operating ball is \emph{strongly
invariant}, in the sense that every Filippov solution initialized in it remains in
it, although the pointwise tangency test fails at synchronized
boundary strata, and the boundary hypothesis itself cannot be
derived from convexity.

\paragraph{The sufficient repair.}
The condition the counterexample violates is also the condition
that suffices.
Say $f$ satisfies the \emph{inward-pointing descent condition} on
$\ccalC=\overline{\ccalB}_\rho(R_c)$ if
$\hat e_c^\top\nabla f(R)\leq0$ for every
$R\in\partial\ccalC$.
Then along $\dot R=R(-\nabla f)^\wedge$ the radial derivative at
the boundary is
$\tfrac{d}{dt}d(R,R_c)=-\hat e_c^\top\bigl(-\nabla f\bigr)
=\hat e_c^\top\nabla f\leq0$, so $\ccalC$ is invariant for the
(single-valued) gradient flow by Nagumo's theorem, and remains
strongly invariant when signum consensus couplings are added,
as the companion analyses establish; compare viability
theory~\cite{aubin2009viability}.
The condition is \emph{automatic} for the entire radial class
$f=\varphi(d(\cdot,R^*))$ with $\varphi'\geq0$, $\varphi'(0)=0$, a smooth even extension
at the origin, and $R^*\in\ccalC$: the gradient is $-\varphi'\,\hat e_{R^*}$, and
the first-variation formula together with the strong geodesic
convexity of the ball for radii below $\pi/2$ gives
$\hat e_c^\top\hat e_{R^*}\geq0$ whenever $R^*$ lies in the
ball~\cite{docarmo1992riemannian}; the quantitative halfspace
form appears in the companion manuscript.
It is thus a mild, checkable hypothesis, yet, as certified above,
a genuinely \emph{additional} one.
Companion submissions by the authors develop the distributed
finite-time consensus and optimization theory on $\SO(3)$ under
this condition; the present note is the self-contained,
quantitative reason the condition appears there as a standing
assumption.

\paragraph{Reproducibility.}
A single Python script reproduces, from scratch, the certified
table of Proposition~\ref{prop:interval} (Arb ball arithmetic via
\texttt{python-flint}~0.9.0 at 256-bit precision, exact rational
inputs, full ball outputs printed), the \texttt{mpmath} interval
cross-check, the exact violation, the step-size table, and all four
figures, with the sampling seed fixed; it is included as ancillary files
with the arXiv submission together with package versions.

\appendix
\section{Explicit Tail Enclosures}\label{app:tails}

This appendix makes the truncation certificates self-contained.
All three quantities are evaluated at the exact rational
$\theta=3/10$, and over $[0,3/10]$ via the positive-coefficient endpoint argument, so the only analytic input is a remainder
bound for each truncated series.

\paragraph{Series.}
From $x\cot x=1-\sum_{k\geq1}2^{2k}|B_{2k}|\,x^{2k}/(2k)!$ with
$x=\theta/2$ and
$1-\beta(\theta)\theta^{2}=(\theta/2)\cot(\theta/2)$,
\nfeq{
	\beta(\theta)=\sum_{k\geq1}b_k\,\theta^{2k-2},
	\quad
	\beta'(\theta)=\sum_{k\geq2}(2k-2)\,b_k\,\theta^{2k-3},
	\quad
	b_k:=\frac{|B_{2k}|}{(2k)!}=\frac{2\zeta(2k)}{(2\pi)^{2k}},
}
with every coefficient positive; $\zeta(2k)<2$ gives
$b_k<4/(2\pi)^{2k}$.

\paragraph{Tail for $\beta$ (truncation through $\theta^{6}$,
$k\leq4$).}
For $\theta\leq3/10$,
\nfeq{
	0\;\leq\;\beta(\theta)-\beta_4(\theta)
	\;=\;\sum_{k\geq5}b_k\theta^{2k-2}
	\;\leq\;\frac{4\,\theta^{8}}{(2\pi)^{10}}\,
	\frac{1}{1-(\theta/2\pi)^{2}}
	\;<\;2.8\times10^{-12}\;<\;10^{-11}.
}

\paragraph{Tail for $\beta'$ (truncation through $\theta^{5}$,
$k\leq4$).}
Consecutive tail terms shrink by
$\tfrac{2k}{2k-2}(\theta/2\pi)^{2}<3\times10^{-3}$, so the $k=5$
term dominates:
\nfeq{
	0\;\leq\;\beta'(\theta)-\beta'_4(\theta)
	\;=\;\sum_{k\geq5}(2k-2)\,b_k\theta^{2k-3}
	\;\leq\;\frac{32\,\theta^{7}}{(2\pi)^{10}}\cdot
	\frac{1}{1-3\times10^{-3}}
	\;<\;7.4\times10^{-11}\;<\;10^{-10}.
}

\paragraph{$\kappa$ as an algebraic consequence.}
$\kappa(\theta)=\sqrt{(1-\beta(\theta)\theta^{2})^{2}
+\theta^{2}/4}$ is a Lipschitz algebraic function of
$(\beta,\theta)$: on the range of interest, where
$0<1-\beta\theta^{2}\leq\kappa$,
\nfeq{
	\Bigl|\frac{\partial\kappa}{\partial\beta}\Bigr|
	=\frac{\theta^{2}\,(1-\beta\theta^{2})}{\kappa}
	\;\leq\;\theta^{2}\;\leq\;\frac{9}{100}.
}
No independent series is required: enclosing $\beta$ by its
truncation plus the tail above and evaluating the closed form in
outward-rounded ball arithmetic over the exact rational
$\theta=3/10$ encloses $\kappa$ with additional radius at most
$(9/100)\cdot10^{-11}<10^{-12}$.

\paragraph{Numerical confirmation.}
Summing the series to $k=60$ at $50$-digit precision
(\texttt{mpmath}) gives measured remainders
$\beta(3/10)-\beta_4(3/10)=1.37\times10^{-12}$ and
$\beta'(3/10)-\beta'_4(3/10)=3.66\times10^{-11}$, inside the
displayed bounds, and reproduces
$\kappa(3/10)=1.00375987$ to all shown digits; the verification
script is included with the ancillary code.

\bibliographystyle{plain}
\bibliography{references}

\end{document}